\newcommand{\partitle}[1]{}                        
\newcommand{\commentout}[1]{}
\newcommand{\keywords}[1]{\par\addvspace\baselineskip
\noindent\keywordname\enspace\ignorespaces#1}
\newcommand{\bi}{\begin{itemize}}
\newcommand{\ei}{\end{itemize}}
\newcommand{\be}{\begin{enumerate}}
\newcommand{\ee}{\end{enumerate}}
\newcommand{\bd}{\begin{description}}
\newcommand{\ed}{\end{description}}
\title{Suffix Tree of Alignment:\\An Efficient Index for Similar Data}
\author{
    Joong Chae Na\inst{1}
    \and
     Heejin Park\inst{2}
    \and
     Maxime Crochemore\inst{3}
    \and
     Jan Holub\inst{4}
    \and \\
     Costas S. Iliopoulos\inst{3}
    \and
     Laurent Mouchard\inst{5}
    \and
     Kunsoo Park\inst{6}
     \thanks{Corresponding author, E-mail: {\tt kpark@snu.ac.kr}}
}
\institute{
    Sejong University,
    Korea
   \and
    Hanyang University,
    Korea
   \and
    King's College London,
    UK
   \and
    Czech Technical University in Prague,
    Czech Republic
   \and
    University of Rouen,
    France
   \and
    Seoul National University,
    Korea
}
\begin{document}

\maketitle


\begin{abstract}
We consider an index data structure for similar strings.
The generalized suffix tree can be a solution for this.
The generalized suffix tree of two strings $A$ and $B$
 is a compacted trie representing all suffixes in $A$ and $B$.
It has $|A|+|B|$ leaves and can be constructed in $O(|A|+|B|)$ time.
However, if the two strings are similar,
 the generalized suffix tree is not efficient
 because it does not exploit the similarity
 which is usually represented as an alignment of $A$ and $B$.

In this paper we propose a space/time-efficient
{\em suffix tree of alignment}
which wisely exploits the similarity in an alignment.
Our suffix tree for an alignment of $A$ and $B$ has
$|A| + l_d + l_1$ leaves
where $l_d$ is the sum of the lengths of {\em all} parts of $B$ different from $A$
and $l_1$ is the sum of the lengths of {\em some} common parts of $A$ and $B$.
We did not compromise the pattern search
to reduce the space.
Our suffix tree can be searched for a pattern $P$
in $O(|P|+occ)$ time
where $occ$ is the number of occurrences of $P$ in $A$ and $B$.
We also present an efficient algorithm
to construct the suffix tree of alignment.
When the suffix tree is constructed from scratch,
the algorithm requires $O(|A| + l_d + l_1 + l_2)$ time
where $l_2$ is the sum of the lengths of other common
substrings of $A$ and $B$.
When the suffix tree of $A$ is already given,
it requires $O(l_d + l_1 + l_2)$ time.
\keywords{Indexes for similar data, suffix trees, alignments}
\end{abstract}

\pagestyle{headings}  
\pagenumbering{arabic}
\setcounter{page}{1}


\section{Introduction}


\partitle{Suffix trees}

The {\em suffix tree} of a string $S$
 is a compacted trie representing all suffixes of $S$~\cite{McCreight:76,Weiner:73}.
Over the years, the suffix tree
 has not only been a fundamental data structure in the area of string algorithms
 but also it has been used for many applications in engineering and computational biology.
%
%
The suffix tree can be constructed in $O(|S|)$ time
 for a constant alphabet~\cite{McCreight:76,Ukkonen:95}
 and an integer alphabet~\cite{Farach-Colton&Ferragina:00},
 where $|S|$ denotes the length of $S$.
The suffix tree has $|S|$ leaves and requires $O(|S|)$ space.

\commentout{
\partitle{a family of ST}

Since the suffix tree was first introduced as a position tree by Weiner~\cite{Weiner:73} in 1973,
 many families and variants have been designed for various applications and circumstances.
The suffix array~\cite{KarkkainenSB06,Manber&Myers:93} is a sorted array of all suffixes
 and  the suffix automaton~\cite{Blumer&Blumer:85,Crochemore&Hancart:07},
 also known as DAWG (Directed Acyclic Word Graph),
 is an automaton representing suffixes of a given string,
 which can be seen as another representations of the suffix tree.
Moreover, compressed representations of the suffix tree
 such as the compressed suffix tree~\cite{Grossi&Vitter:05,Sadakane07TCS}
 and extensions to two-dimensional data
 such as two-dimensional suffix trees~\cite{Giancarlo:95,Kim&Na:11}
 have also been developed.
Other families and variants are
 the suffix tree of a tree~\cite{focs/Kosaraju89a},
 the generalized suffix tree~\cite{Amir&Farach:94,Gusfield:97},
 the suffix cactus~\cite{Karkkainen:95},
 the sparse suffix tree~\cite{Karkkainen&Ukkonen:96},
 the lazy suffix tree~\cite{GiegerichKS03},
 the truncated suffix tree~\cite{Na&Apostolico:03},
 the suffix binary search tree~\cite{Irving&Love:03},
 the suffix tray \& the suffix trist~\cite{ColeKL06},
 the linearized suffix tree~\cite{Kim&Kim:08}
 the geometric suffix tree~\cite{jacm/Shibuya10}, and so on.
}

\partitle{Similar data}

We consider storing
 and indexing multiple data which are very similar.
Nowadays, tons of new data are created every day.
Some data are totally original and substantially different from existing data.
Others are, however, created by modifying some existing data
 and thus they are similar to the existing data.
For example, a  new version of a source code is a modification of its previous version.
Today's backup is almost the same as yesterday's backup.
An individual Genome is more than 99\% identical to the Human reference Genome
 (the 1000 genome project~\cite{nature/1000Genomes10}).
Thus, storing and indexing similar data in an efficient way is becoming more and  more important.

Similar data are usually stored efficiently:
When new data are created, they are aligned with the existing ones.
Then, the resulting alignment shows the common and different parts of the new data.
By only storing the different parts of the new data, the similar data can be stored efficiently.

When it comes to indexing, however,
 neither the suffix tree nor any variant of the suffix tree
 uses this similarity or alignment to index similar data efficiently.
Consider the {\em generalized suffix tree}~\cite{Amir&Farach:94,Gusfield:97}
 for two similar strings $A =$ {\tt aaatcaaa} and $B =$ {\tt aaatgaaa}.
Three common suffixes {\tt aaa}, {\tt aa}, {\tt a}
 are stored twice in the generalized suffix tree.
Moreover, two similar suffixes {\tt aaatcaaa} and {\tt aaatgaaa}
 are stored in distinct leaves
 even though they are very similar.
Thus, the generalized suffix tree has $|A|+|B|$ leaves, most of which are redundant.

\partitle{Previous works}

Recently, there have been some studies concerning efficient indexes for similar strings.
M\"{a}kinen et al.~\cite{recomb/MakinenNSV09,jcb/MakinenNSV10}
 first proposed an index for similar (repetitive) strings
 using run-length encoding, a suffix array, and BWT~\cite{Burrows&Wheeler:94}.
Huang et al.~\cite{aaim/HuangLSTY10} proposed an index of size $O(n + N\log N)$ bits
 where $n$ is the total length of common parts in one string,
 $N$ is the total length of different parts in all strings.
Their basic approach is building separately data structures for common parts
 and ones for different parts between strings.
A self-index based on LZ77 compression~\cite{Ziv&Lempel:77}
 has been also developed due to Kreft and Navarro~\cite{Kreft&Navarro:XX}.
Another index based on Lemple-Ziv compression scheme is
 due to Do et al.~\cite{aaim/DoJSS12}.
They compressed sequences using a variant of the relative Lempel-Ziv (RLZ) compression
 scheme~\cite{spire/KuruppuPZ10}
 and used a number of auxiliary data structures to support fast pattern search.
Navaro~\cite{iwoca/Navarro12} gave a short survey on some of these indexes.

\partitle{Motivation}

Although these studies assume slightly different models on similar strings,
 most of them adopt classical compressed indexes to utilize the similarity among strings,
 that is, they focus on how to efficiently represent or encode common (repetitive) parts in strings.
However, none of them support linear-time pattern search.
Moreover, their pattern search time do not depend on only the pattern length
 but also the text length,
 and some indexes require (somewhat complicated) auxiliary data structures
 to improve pattern search time.
In short, those data structures achieve smaller indexes
by sacrificing pattern search time.

\partitle{Our contribution}
In this paper, we propose an efficient index for similar strings
without sacrifice the pattern search time.
It is  a novel data structure for similar strings,
 named {\em suffix tree of alignment}.
%
We assume that strings (texts) are aligned with each others, e.g.,
 two strings $A$ and $B$ can be represented as
 $\alpha_1 \beta_1 \ldots \alpha_k \beta_k \alpha_{k+1}$ and
 $\alpha_1 \delta_1 \ldots \alpha_k \delta_k \alpha_{k+1}$, respectively,
 where $\alpha_i$'s are common chunks and
  $\beta_i$'s and $\delta_i$'s are chunks different from the other string.
(We note that the given alignment is not required to be optimal.)
Then, our suffix tree for $A$ and $B$
 has the following properties.
(It should be noted that our index and algorithms can be generalized to three or more strings,
 although we only describe our contribution for two strings
for simplicity.)
\bi
\item {\bf Space reduction}:
Our suffix tree has $|A| + l_d + l_1$ leaves
 where $l_d$ is the sum of the lengths of {\em all} chunks of $B$ different from $A$
 (i.e., $\Sigma_{i=0}^{k} |\delta_i|$)
 and $l_1$ is the sum of the lengths of {\em some} common chunks of $A$ and $B$.
More precisely, $l_1$ is $\Sigma_{i=0}^{k} |\alpha^{*}_i|$
 where $\alpha^{*}_i$ is the longest suffix of $\alpha_i$ appearing at least twice in $A$ or in $B$.
The value of $\alpha^{*}_i$ is $O(\log \max(|A|,|B|))$ on average for random
 strings~\cite{karlin1983new}.
Furthermore, the values of $l_d$ and $l_1$ are very small in practice.
For instance, consider
two human genome sequences from two different individuals.
Since they are more than 99\% identical,
$l_d$ is very small compared to $|B|$.
We have computed $\alpha^{*}_i$ for human genome sequences
and found out $\alpha^{*}_i$ is very close to $\log \max(|A|,|B|)$,
 even though human genome sequences are not random.
Hence, our suffix tree is space-efficient for similar strings.
Note that the space of our index can be further reduced
 in the form of compressed indexes
 such as the compressed suffix tree~\cite{Grossi&Vitter:05,Sadakane07TCS}.
Our index is an important building block (rather than a final product)
 towards the goal of efficient indexing for highly similar data.
\item {\bf Pattern search}:
Our index is achieved without compromising the linear-time pattern search.
That is, using our suffix tree, one can search a pattern $P$ in $O(|P|+occ)$ time,
 where $occ$ is the number of occurrences of $P$ in $A$ and $B$.
In addition to the linear-time pattern search,
 we believe that our index supports the most of suffix tree functionalities,
 e.g., regular expression matchings, matching statistics, approximate matchings,
 substring range reporting,
 and so on~\cite{Baeza-Yates&Gonnet:96,Bille&Gortz:11,Gusfield:97},
 because our index is a kind of suffix trees.
\ei

\partitle{contribution 2}

We also present an efficient algorithm to construct the suffix tree of alignment.
One na\"{i}ve method to construct our suffix tree is constructing the generalized suffix tree
 and deleting unnecessary leaves.
However, it is not time/space-efficient.
\bi
\item When our suffix tree for the strings $A$ and $B$ is constructed from scratch,
  our construction algorithm requires $O(|A| + l_d + l_1 + l_2)$ time
  where $l_2$ is the sum of the lengths of other parts of common chunks of $A$ and $B$.
 More precisely, $l_2$ is $\Sigma_{i=1}^{k+1} |\hat{\alpha}_i|$
  where $\hat{\alpha}_i$ is the longest prefix of $\alpha_i$
  such that $d_i \alpha_i$ appears at least twice in $A$ and $B$
  ($d_i$ is the character preceding $\alpha_i$ in $B$.
Likewise with $l_1$,
 the value of $l_2$ is also very small compared to $|A|$ or $|B|$ in practice.
\item Our algorithm is incremental,
    i.e., we construct the suffix tree of $A$
     and then transform it to the suffix tree of the alignment.
    Thus, when the suffix tree of $A$ is already given,
     it requires $O(l_d + l_1 + l_2)$ time.
    $O(l_d + l_1 + l_2)$ is the minimum time required to make our index a kind of suffix tree
     so that linear-time pattern search is possible on both $A$ and $B$.
    Furthermore, our algorithm can be applied to the case when some strings are newly inserted
     or deleted.
\item Our algorithm uses constant-size extra working space except for our suffix tree itself.
 Thus, it is space-efficient compared to the na\"{i}ve method.
\ei
The space/time-efficiency of our construction algorithm becomes large
 when handling many strings.
The efficiency is feasible when the alignment has been computed in advance,
 which is the case in some applications.
For instance, in the Next-Generation Sequencing,
 the reference genome sequence is given
 and the genome sequence of a new individual is obtained
 by aligning against the reference sequence.
So, when a string (a new genome sequence) is obtained,
 the alignment is readily available.
Moreover, since our index does not require that the given alignment is optimal,
 we can use a near-optimal alignment instead of the optimal alignment
 if the time to compute an alignment is an important issue.
Since the given strings are assumed to be highly similar,
 a near-optimal alignment can be computed fast from exact string matching
 instead of dynamic programming requiring much time.

%
%
%


\section{Preliminaries}
\label{sec:preliminaries}



\subsection{Suffix trees}
\label{subsec:ST}


\partitle{Strings}

Let $S$ be a string over a fixed alphabet $\Sigma$.
A substring of $S$ beginning at the first position of $S$
 is called a {\it prefix} of $S$
 and a substring ending at the last position of $S$
 is called a {\it suffix} of $S$.
We denote by $|S|$ the length of $S$.
We assume that the last character of $S$ is a special symbol $\# \in \Sigma$,
 which occurs nowhere else in $S$.
%

%
%
\begin{figure}[t]
\centerline{\def\h{-2em}
\begin{tikzpicture}[font=\small,
                    inode/.style={text=black,inner sep=2pt,circle,draw=black},
                    leaf/.style={draw=black,text=black,inner sep=2pt, rectangle}]
    \node[inode] (root) at (3em,0em) {};
    \node[inode] (a) at (-4em,\h) {};
    \node[inode] (b) at (8em,\h) {};

    \draw[dashed] (root) to node[above,midway,sloped] {\texttt{b}} (b);
    \draw[dashed] (b) --+ (-2em,-2em) ;
    \draw[dashed] (b) --+ (+2em,-2em) ;

    \node[inode] (aa) at (-8em,2*\h) {};
    \node[leaf] (ad) at (-4em,2*\h) {};
    \node[inode] (ab) at (2em,2*\h) {};

    \node[inode] (aaab) at (-11em,3*\h) {};
    \node[inode] (aab) at (-5em,3*\h) {};
    \node[inode] (aba) at (0em,3*\h) {};
    \node[leaf] (abbaabad) at (3em,6*\h) {};


    \node[inode] (aaba) at (-7em,3.5*\h) {};
    \node[leaf] (aaabaaabbaaba) at (-12em,6*\h) {};
    \node[leaf] (aaabbaaba) at (-10em,6*\h) {};
    \node[leaf] (aabbaaba) at (-8em,6*\h) {};
    \node[leaf] (abad) at (1em,6*\h) {};
    \node[leaf] (aabad) at (-6em,6*\h) {};
    \node[leaf] (abaaabbaabad) at (-1em,6*\h) {};
    \node[leaf] (aabbaabad) at (-4em,6*\h) {};


    \draw (root) to node[above,midway,sloped] {\texttt{a}} (a) ;

    \draw (a) to node[above,midway,sloped] {\texttt{a}} (aa) ;
    \draw (a) to node[above,midway,right] {\texttt{\#}} (ad) ;
    \draw (a) to node[above,midway,sloped] {\texttt{b}} (ab);


    \draw (ab) to node[above,midway,sloped] {\texttt{baaba\#}} (abbaabad) ;
    \draw (ab) to node[above,midway,sloped] {\texttt{a}} (aba) ;

    \draw (aa) to node[above,midway,sloped] {\texttt{ab}} (aaab) ;
    \draw (aa) to node[above,midway,sloped] {\texttt{b}} (aab) ;

    \draw (aab) to node[above,midway,sloped] {\texttt{a}} (aaba) ;

    \draw (aba) to node[above,midway,right] {\texttt{\#}} (abad) ;

    \draw (aaab) to node[above,midway,sloped] {\texttt{aaabbaaba\#}} (aaabaaabbaaba);
    \draw (aaab) to node[above,midway,sloped] {\texttt{baaba\#}} (aaabbaaba);
    \draw (aaba) to node[above,midway,sloped] {\texttt{aabbaaba\#}} (aabbaaba);

    \draw (aba) to node[above,midway,sloped] {\texttt{aabbaaba\#}} (abaaabbaabad);

     \draw (aab) to node[above,midway,sloped] {\texttt{baaba\#}}(aabbaabad);

     \draw (aaba) to node[above,midway,right] {\texttt{\#}}(aabad);

\end{tikzpicture}}
\caption{The suffix tree of string {\tt aaabaaabbaaba\#}.
\label{fig:ST}}
\end{figure}
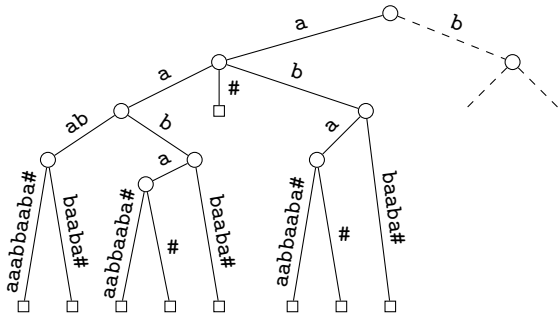

\partitle{Suffix trees}

The {\em suffix tree} of a string $S$ is a compacted trie with $|S|$ leaves,
 each of which represents each suffix of $S$.
Figure~\ref{fig:ST} shows the suffix tree of a string {\tt aaabaaabbaaba\#}.
For formal descriptions,
 the readers are referred to~\cite{Crochemore&Rytter:02,Gusfield:97}.
McCreight~\cite{McCreight:76} proposed a linear-time construction algorithm
 using auxiliary links called {\em suffix links}
 and also an algorithm for an {\em incremental editing},
 which transform the suffix tree of $S=\alpha \beta \gamma$
 to that of $S'=\alpha \delta \gamma$
 for some (possibly empty) string $\alpha$, $\beta$, $\delta$, $\gamma$.

%
%
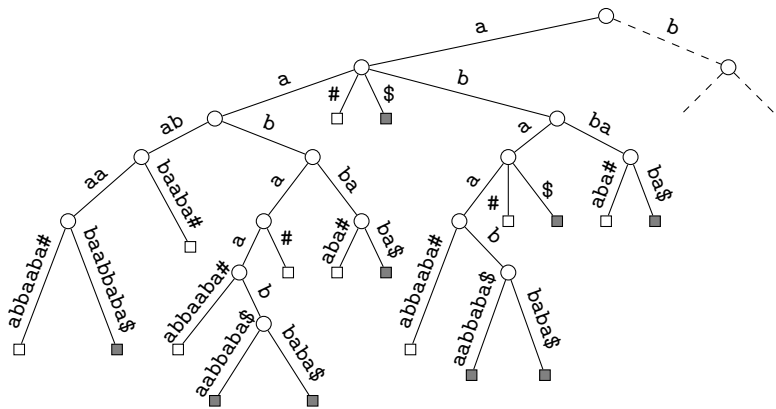
\begin{figure}[t]
\centerline{\def\h{-2.1em}
\begin{tikzpicture}[font=\small,
                    inode/.style={text=black,inner sep=2pt,circle,draw=black},
                    gleaf/.style={draw=black,fill=gray,text=black,inner sep=2pt, rectangle},
                    leaf/.style={draw=black,text=black,inner sep=2pt, rectangle}]

    \node[inode] (root) at (10em,0em) {};

    \node[inode] (a) at (0em,\h) {};
    \draw (root) to node[above,midway,sloped] {\texttt{a}} (a);

    \node[inode] (aa) at (-6em,2*\h) {};
    \node[leaf] (ad1) at (-1em,2*\h) {};
    \node[gleaf] (ad2) at (1em,2*\h) {};
    \node[inode] (ab) at (8em,2*\h) {};
    \draw (a) to node[above,midway,sloped] {\texttt{a}} (aa);
    \draw (a) to node[left,midway] {\texttt{\#}} (ad1);
    \draw (a) to node[midway,right] {\texttt{\$}} (ad2);
    \draw (a) to node[above,midway,sloped] {\texttt{b}} (ab);

    \node[inode] (aaab) at (-9em,2.75*\h) {};
    \node[inode] (aab) at (-2em,2.75*\h) {};
    \draw (aa) to node[above,midway,sloped] {\texttt{ab}} (aaab);
    \draw (aa) to node[above,sloped,midway] {\texttt{b}} (aab);

    \node[inode] (aba) at (6em,2.75*\h) {};
    \node[inode] (abba) at (11em,2.75*\h) {};
    \draw (ab) to node[above,midway,sloped] {\texttt{a}} (aba);
    \draw (ab) to node[above,sloped,midway] {\texttt{ba}} (abba);

    \node[inode] (aaabaa) at (-12em,4*\h) {};
    \node[leaf] (aaabbaabad1) at (-7em,4.5*\h) {};
    \draw (aaab) to node[above,midway,sloped] {\texttt{aa}} (aaabaa);
    \draw (aaab) to node[above,midway,sloped] {\texttt{baaba\#}} (aaabbaabad1);

    \node[inode] (aaba) at (-4em,4*\h) {};
    \node[inode] (aabba) at (0em,4*\h) {};
    \draw (aab) to node[above,midway,sloped] {\texttt{a}} (aaba);
    \draw (aab) to node[above,midway,sloped] {\texttt{ba}} (aabba);

    \node[inode] (abaa) at (4em,4*\h) {};
    \node[leaf] (abad1) at (6em,4*\h) {};
    \node[gleaf] (abad2) at (8em,4*\h) {};
    \draw (aba) to node[above,midway,sloped] {\texttt{a}} (abaa);
    \draw (aba) to node[left,near end] {\texttt{\#}} (abad1);
    \draw (aba) to node[right] {\texttt{\$}} (abad2);

    \node[leaf] (abbaabad1) at (10em,4*\h) {};
    \node[gleaf] (abbabad2) at (12em,4*\h) {};
    \draw (abba) to node[above,midway,sloped] {\texttt{aba\#}} (abbaabad1);
    \draw (abba) to node[above,midway,sloped] {\texttt{ba\$}} (abbabad2);

    \node[leaf] (aaabaaabbaabad1) at (-14em,6.5*\h) {};
    \node[gleaf] (aaabaabaabbabad2) at (-10em,6.5*\h) {};
    \draw (aaabaa) to node[above,midway,sloped] {\texttt{abbaaba\#}} (aaabaaabbaabad1);
    \draw (aaabaa) to node[above,midway,sloped] {\texttt{baabbaba\$}} (aaabaabaabbabad2);

    \node[inode] (aabaa) at (-5em,5*\h) {};
    \node[leaf] (aabad1) at (-3em,5*\h) {};
    \draw (aaba) to node[above,midway,sloped] {\texttt{a}} (aabaa);
    \draw (aaba) to node[right,near start] {\texttt{\#}} (aabad1);

    \node[leaf] (aabbabad1) at (-1em,5*\h) {};
    \node[gleaf] (aabbabad2) at (1em,5*\h) {};
    \draw (aabba) to node[above,midway,sloped] {\texttt{aba\#}} (aabbabad1);
    \draw (aabba) to node[above,midway,sloped] {\texttt{ba\$}} (aabbabad2);

    \node[leaf] (abaaabbaabad1) at (2em,6.5*\h) {};
    \node[inode] (abaab) at (6em,5*\h) {};
    \draw (abaa) to node[above,midway,sloped] {\texttt{abbaaba\#}} (abaaabbaabad1);
    \draw (abaa) to node[above, midway,sloped] {\texttt{b}} (abaab);

    \node[gleaf] (abaabaabbabad2) at (4.5em,7*\h) {};
    \node[gleaf] (abaabbabad2) at (7.5em,7*\h) {};
    \draw (abaab) to node[above,midway,sloped] {\texttt{aabbaba\$}} (abaabaabbabad2);
    \draw (abaab) to node[above,midway,sloped] {\texttt{baba\$}} (abaabbabad2);

    \node[leaf] (aabaaabbaabad1) at (-7.5em,6.5*\h) {};
    \node[inode] (aabaab) at (-4em,6*\h) {};
    \draw (aabaa) to node[above,midway,sloped] {\texttt{abbaaba\#}} (aabaaabbaabad1);
    \draw (aabaa) to node[above,midway,sloped] {\texttt{b}} (aabaab);

    \node[gleaf] (aabaabaabbabad2) at (-6em,7.5*\h) {};
    \node[gleaf] (aabaabbabad2) at (-2em,7.5*\h) {};
    \draw (aabaab) to node[above,midway,sloped] {\texttt{aabbaba\$}} (aabaabaabbabad2);
    \draw (aabaab) to node[above,midway,sloped] {\texttt{baba\$}} (aabaabbabad2);

    \node[inode] (b) at (15em,\h) {};
    \draw[dashed] (root) to node[above,midway,sloped] {\texttt{b}} (b);

    \draw[dashed] (b) --+ (-2em,-2em) ;
    \draw[dashed] (b) --+ (+2em,-2em) ;

\end{tikzpicture}}
\caption{The generalized suffix tree of two strings
  $A=$ {\tt aaabaaabbaaba\#} and $B=$ {\tt aaabaabaabbaba\$}.
 Leaves denoted by white squares and gray squares
  represent suffixes of $A$ and $B$, respectively.
\label{fig:GST}}
\end{figure}

\partitle{Generalized ST}

The {\em generalized suffix tree} of two strings $A$ and $B$
 is a suffix tree representing all suffixes of the two strings.
It can be obtained
 by constructing the suffix tree of the concatenated string $AB$
 where it is assumed that the last characters of $A$ and $B$
  are distinct~\cite{Amir&Farach:94,Gusfield:97}.
Thus, the generalized suffix tree has $|A|+|B|$ leaves
 and can be constructed in $O(|A|+|B|)$ time.
Figure~\ref{fig:GST} shows the generalized suffix tree of
 two strings {\tt aaabaaabbaaba\#} and {\tt aaabaabaabbaba\$}.


\subsection{Alignments}
\label{subsec:Alignments}


\partitle{definition}

Given two strings $A$ and $B$,
 an alignment of $A$ and $B$ is a mapping between the two strings that
 represents how $A$ can be transform to $B$ by replacing substrings of $A$ into those of $B$.
For example,
 let $A= \alpha \beta \gamma$ and $B= \alpha \delta \gamma$
 for some strings $\alpha$, $\beta$, $\gamma$, and $\delta$ ($\neq \beta$).
Then, we can get $B$ from $A$ by replacing $\beta$ with $\delta$.
We denote this replacement by alignment $\alpha (\beta /\, \delta) \gamma$.

\partitle{general form}

More generally, an alignment of two strings
 $A=\alpha_1 \beta_1 \ldots \alpha_k \beta_k \alpha_{k+1}$ and
 $B= \alpha_1 \delta_1 \ldots \alpha_k \delta_k \alpha_{k+1}$, for some $k\ge 1$,
 can be denoted by
 $\alpha_1 (\beta_1 /\, \delta_1) \ldots \alpha_k (\beta_k /\, \delta_k) \alpha_{k+1}$.
For simplicity, we assume that
 both $A$ and $B$ end with the special symbol $\# \in \Sigma$,
 which is contained in $\alpha_{k+1}$.
Without loss of generality,
 we assume the following conditions are satisfied
 for every $i=1,\ldots, k$.
\bi
\item $\alpha_{i+1}$ is not empty ($\alpha_1$ can be empty).
\item Either $\beta_i$ or $\delta_i$ can be empty.
\item The first characters of $\beta_i\alpha_{i+1}$ and $\delta_i\alpha_{i+1}$
       are distinct.
\ei
Note that these conditions are satisfied for the optimal alignments
 by most of popular distance measures such as the edit distance~\cite{Levenshtein:66}.
Moreover, alignments unsatisfying the conditions
 can be easily converted to satisfy the conditions.
If $\alpha_{i+1}$ ($i=1,\ldots,k-1$) is empty,
 $\beta_{i}$ and $\beta_{i+1}$ ($\delta_{i}$ and $\delta_{i+1}$) can be merged.
 (Note that $\alpha_{k+1}$ cannot be empty since $\#$ is contained in $\alpha_{k+1}$.)
If both $\beta_i$ and $\delta_i$ are empty,
 $\alpha_{i}$ and $\alpha_{i+1}$ can be merged.
Finally, if the first characters of $\beta_i\alpha_{i+1}$ and $\delta_i\alpha_{i+1}$
       are identical (say $c$),
 we include $c$ in $\alpha_i$ instead of $\beta_i\alpha_{i+1}$ and $\delta_i\alpha_{i+1}$.




\section{Suffix tree of simple alignments}
\label{sec:ST-simple-A}


\partitle{Intro.}

In this section,
 we define the suffix tree of a simple alignment ($k=1$)
 and present how to construct the suffix tree.


\subsection{Definitions}
\label{subsec:def-simple}


\partitle{a-suffix}

For some strings $\alpha$, $\beta$, $\gamma$, and $\delta$,
 let $\alpha (\beta /\, \delta)\gamma$ be an alignment of
 two strings $A= \alpha \beta \gamma$ and $B= \alpha \delta \gamma$.
We define suffixes of the alignment, called {\em alignment-suffixes} (for short {\em a-suffixes}).
Let $\alpha^{a}$ and $\alpha^{b}$ be the longest suffixes of $\alpha$
 which occur at least twice in $A$ and $B$, respectively,
 and let $\alpha^{*}$ be the longer of $\alpha^{a}$ and $\alpha^{b}$.
That is, $\alpha^{*}$ is the longest suffix of $\alpha$
 which occurs at least twice in $A$ or in $B$.
Then, there are 4 types of a-suffixes as follows.
\be
\item a suffix of $\gamma$,
\item a suffix of $\alpha^{*} \beta \gamma$ longer than $\gamma$,
\item a suffix of $\alpha^{*} \delta \gamma$ longer than $\gamma$,
\item $\alpha' (\beta /\, \delta) \gamma$
        where $\alpha'$ is a suffix of $\alpha$ longer than $\alpha^{*}$.
       (Note that an a-suffix of this type represents two normal suffixes
        derived from $A$ and $B$.)
\ee
For example, consider an alignment {\tt aaabaa(abba/baabb)aba\#}.
Then, $\alpha^{a}$ and $\alpha^{b}$ are {\tt baa} and {\tt aabaa}, respectively,
 and $\alpha^{*}$ is {\tt aabaa}.
Since $\alpha^{*}$ is {\tt aabaa},
 {\tt ba\#}, {\tt abaa\underline{abba}aba\#}, {\tt \underline{aabb}aba\#},
 and {\tt aaabaa(\underline{abba/baabb})aba\#} are a-suffixes of type 1, 2, 3, and 4, respectively
 (underlined strings denote symbols in $\beta$ and $\delta$).
The reason why we divide a-suffixes longer than $(\beta /\, \delta)\gamma$
 into ones longer than $\alpha^{*}(\beta /\, \delta)\gamma$ (type 4)
 and the others (types 2 and 3),
 or why $\alpha^*$ becomes the division point,
 has to do with properties of suffix trees
 and we explain the reason later.

%
%
\begin{figure}[t]
\centerline{\def\h{-2.1em}
\begin{tikzpicture}[font=\small,
                    inode/.style={text=black,inner sep=2pt,circle,draw=black},
                    bleaf/.style={draw=black,fill=black,text=black,inner sep=2pt, rectangle},
                    gleaf/.style={draw=black,fill=gray,text=black,inner sep=2pt, rectangle},
                    leaf/.style={draw=black,text=black,inner sep=2pt, rectangle}]

    \node[inode] (root) at (10em,0em) {};

    \node[inode] (a) at (0em,\h) {};
    \draw (root) to node[above,midway,sloped] {\texttt{a}} (a);

    \node[inode] (aa) at (-6em,2*\h) {};
    \node[bleaf] (ad2) at (0em,2*\h) {};
    \node[inode] (ab) at (8em,2*\h) {};
    \draw (a) to node[above,midway,sloped] {\texttt{a}} (aa);
    \draw (a) to node[midway,right] {\texttt{\#}} (ad2);
    \draw (a) to node[above,midway,sloped] {\texttt{b}} (ab);

    \node[inode] (aaab) at (-9em,3*\h) {};
    \node[inode] (aab) at (-2em,3*\h) {};
    \draw (aa) to node[above,midway,sloped] {\texttt{ab}} (aaab);
    \draw (aa) to node[above,sloped,midway] {\texttt{b}} (aab);

    \node[inode] (aba) at (6em,3*\h) {};
    \node[inode] (abba) at (11em,3*\h) {};
    \draw (ab) to node[above,midway,sloped] {\texttt{a}} (aba);
    \draw (ab) to node[above,sloped,midway] {\texttt{ba}} (abba);

    \node[bleaf] (aaabaa) at (-12em,7*\h) {};
    \node[leaf] (aaabbaabad1) at (-8em,5*\h) {};
    \draw (aaab) to node[above,midway,sloped] {\texttt{aa(abba/baabb)aba\#}} (aaabaa);
    \draw (aaab) to node[above,midway,sloped] {\texttt{baaba\#}} (aaabbaabad1);

    \node[inode] (aaba) at (-4em,4*\h) {};
    \node[inode] (aabba) at (0em,4*\h) {};
    \draw (aab) to node[above,midway,sloped] {\texttt{a}} (aaba);
    \draw (aab) to node[above,midway,sloped] {\texttt{ba}} (aabba);

    \node[inode] (abaa) at (4em,4*\h) {};
    \node[bleaf] (abad2) at (8em,4*\h) {};
    \draw (aba) to node[above,midway,sloped] {\texttt{a}} (abaa);
    \draw (aba) to node[right] {\texttt{\#}} (abad2);

    \node[leaf] (abbaabad1) at (10em,4*\h) {};
    \node[gleaf] (abbabad2) at (12em,4*\h) {};
    \draw (abba) to node[above,midway,sloped] {\texttt{aba\#}} (abbaabad1);
    \draw (abba) to node[above,midway,sloped] {\texttt{ba\#}} (abbabad2);


    \node[inode] (aabaa) at (-5em,5*\h) {};
    \node[leaf] (aabad1) at (-3em,5*\h) {};
    \draw (aaba) to node[above,midway,sloped] {\texttt{a}} (aabaa);
    \draw (aaba) to node[right,near start] {\texttt{\#}} (aabad1);

    \node[leaf] (aabbabad1) at (-1em,5*\h) {};
    \node[gleaf] (aabbabad2) at (1em,5*\h) {};
    \draw (aabba) to node[above,midway,sloped] {\texttt{aba\#}} (aabbabad1);
    \draw (aabba) to node[above,midway,sloped] {\texttt{ba\#}} (aabbabad2);

    \node[leaf] (abaaabbaabad1) at (2em,7*\h) {};
    \node[inode] (abaab) at (6em,5*\h) {};
    \draw (abaa) to node[above,midway,sloped] {\texttt{abbaaba\#}} (abaaabbaabad1);
    \draw (abaa) to node[above, midway,sloped] {\texttt{b}} (abaab);

    \node[gleaf] (abaabaabbabad2) at (4.5em,7*\h) {};
    \node[gleaf] (abaabbabad2) at (7.5em,7*\h) {};
    \draw (abaab) to node[above,midway,sloped] {\texttt{aabbaba\#}} (abaabaabbabad2);
    \draw (abaab) to node[above,midway,sloped] {\texttt{baba\#}} (abaabbabad2);

    \node[leaf] (aabaaabbaabad1) at (-8em,6.5*\h) {};
    \node[inode] (aabaab) at (-4em,6*\h) {};
    \draw (aabaa) to node[above,midway,sloped] {\texttt{abbaaba\#}} (aabaaabbaabad1);
    \draw (aabaa) to node[above,midway,sloped] {\texttt{b}} (aabaab);

    \node[gleaf] (aabaabaabbabad2) at (-6em,7.5*\h) {};
    \node[gleaf] (aabaabbabad2) at (-2em,7.5*\h) {};
    \draw (aabaab) to node[above,midway,sloped] {\texttt{aabbaba\#}} (aabaabaabbabad2);
    \draw (aabaab) to node[above,midway,sloped] {\texttt{baba\#}} (aabaabbabad2);

    \node[inode] (b) at (15em,\h) {};
    \draw[dashed] (root) to node[above,midway,sloped] {\texttt{b}} (b);

    \draw[dashed] (b) --+ (-2em,-2em) ;
    \draw[dashed] (b) --+ (+2em,-2em) ;

\end{tikzpicture}}
\caption{The suffix tree of an alignment {\tt aaabaa(abba/baabb)aba\#}.
 Leaves denoted by black squares, white squares, gray squares, and black diamonds
  represent a-suffixes of types 1, 2, 3, and 4, respectively.
\label{fig:ST-A}}
\end{figure}

\partitle{suffix tree of the alignment}

The {\em suffix tree of alignment} $\alpha (\beta /\,\delta) \gamma$
 is a compacted trie representing all a-suffixes of the alignment.
Formally, the suffix tree $T$ for the alignment
 is a rooted tree satisfying the following conditions.
\be
\item Each nonterminal arc is labeled with a nonempty substring of $A$ or $B$.
\item Each terminal arc is labeled with a nonempty suffix of $\beta \gamma$ or $\delta \gamma$,
       or with $\alpha'(\beta /\, \delta) \gamma$, where $\alpha'$ is a nonempty suffix of $\alpha$.
\item Each internal node $v$ has at least two children
       and the labels of arcs from $v$ to its children begin with distinct symbols.
\ee
Figure~\ref{fig:ST-A} shows the suffix tree of the alignment
{\tt aaabaa(abba/baabb)aba\#}.

\partitle{Difference}

The differences from classic suffix trees of strings
 (including generalized suffix trees) are as follows.
To reduce space, we represent common suffixes of $A$ and $B$ with one leaf.
For example, there exists one leaf representing {\tt aba\#} in Figure~\ref{fig:ST-A}
 because {\tt aba\#} is common suffixes of $A$ and $B$ (type 1).
However, suffixes of $A$ and $B$ longer than $\gamma$ derived from $(\beta / \delta)\gamma$
 are not common
 and thus we deal with these suffixes separately (types 2 and 3).
For suffixes longer than $(\beta / \delta)\gamma$,
 we have two cases.
First, consider an a-suffix $\alpha' (\beta /\, \delta) \gamma$ (type 4)
 such that $\alpha'$ is a suffix of $\alpha$ longer than $\alpha^{*}$,
 e.g., {\tt aaabaa(abba/baabb)aba\#}.
Due to the definition of $\alpha^{*}$,
 $\alpha'$ appears only once in each of $A$ and $B$ (at the same position)
 and we can represent $\alpha' (\beta /\, \delta) \gamma$ with one leaf
 by considering the terminal arc connected to the leaf
 is labeled with an alignment not a string,
 e.g., the leftmost (black diamond) leaf in Figure~\ref{fig:ST-A}.
However, it cannot be applicable to an a-suffix $\alpha'' (\beta /\, \delta) \gamma$
 such that $\alpha''$ is a suffix of $\alpha^{*}$,
 e.g., {\tt abaa(abba/baabb)aba\#}.
Since $\alpha''$ appears at least twice in $A$ or in $B$,
 $(\beta /\, \delta)$ may not be contained in the label of one arc.
Thus, we represent the a-suffix by two leaves,
 one of which represents $\alpha'' \beta \gamma$ (type 2)
 and the other $\alpha'' \delta \gamma$ (type 3),
 e.g., leaf $x$ representing {\tt abaa\underline{abba}aba\#}
 and leaf $y$ representing {\tt abaa\underline{baabb}aba\#}
 in Figure~\ref{fig:ST-A}.

\partitle{pattern search}

Pattern search can be solved using the suffix tree of alignment
 in the same way as using suffix trees of strings
 except for handling terminal arcs labeled with alignments.
When we meet a terminal arc labeled with
 an alignment $\alpha' (\beta /\, \delta) \gamma$ during search,
 we first compare $\alpha'$ with the pattern
 and then decide which of $\beta$ and $\delta$ we compare with the pattern
 by checking the first symbols of $\beta \gamma$ and $\delta \gamma$.
This comparison is in fact similar to branching at nodes.



\subsection{Construction}
\label{subsec:const-simple}


%

\partitle{Intro}

We describe how to construct the suffix tree $T$ for an alignment.
We assume the suffix tree $T^A$ of string $A$ is given.
($T^A$ can be constructed in $O(|A|)$ time~\cite{McCreight:76,Ukkonen:95}.)
To transform $T^A$ into the suffix tree $T$ of the alignment,
 we should insert the suffixes of $B$ into $T^A$.
We divide the suffixes of $B$ into three groups:
 suffixes of $\gamma$, suffixes of $\alpha^{*}\delta \gamma$ longer than $\gamma$,
 and suffixes of $\alpha\delta\gamma$ longer than $\alpha^{*}\delta \gamma$,
 which correspond to a-suffixes of types 1, 3, and 4, respectively.
First, we do not have to do anything for a-suffixes of type 1.
The suffixes of type 1 (suffixes of $\gamma$)
 already exist in $T^A$
because these are common suffixes in $A$ and $B$.

\partitle{outline}

Inserting the suffixes of $B$ longer than $\gamma$ consists of three steps.
We {\em explicitly} insert the suffixes shorter than or equal to $\alpha^{*}\delta \gamma$
 (type 3) in Steps A and B,
 and {\em implicitly} insert the suffixes longer than $\alpha^{*}\delta \gamma$ (type 4) in Step C
 as follows.
\be
\item[A.] Find $\alpha^{a}$
         and insert the suffixes of $\alpha^{a}\delta \gamma$ longer than $\gamma$.
\item[B.] Find $\alpha^{*}$
         and insert the suffixes of $\alpha^{*} \delta \gamma$
             longer than $\alpha^{a}\delta \gamma$.
\item[C.] Insert {\em implicitly} the suffixes of $\alpha\delta\gamma$
         longer than $\alpha^{*}\delta \gamma$.
\ee

%
%
\begin{figure}[t]
\centerline{\def\h{-2em}
\begin{tikzpicture}[font=\small,
                    inode/.style={text=black,inner sep=2pt,circle,draw=black},
                    leaf/.style={draw=black,text=black,inner sep=2pt, rectangle},
                    gleaf/.style={draw=black,fill=gray,text=black,inner sep=2pt, rectangle}]
    \node[inode] (root) at (3em,0em) {};
    \node[inode] (a) at (-4em,\h) {};
    \node[inode] (b) at (8em,\h) {};

    \draw[dashed] (root) to node[above,midway,sloped] {\texttt{b}} (b);
    \draw[dashed] (b) --+ (-2em,-2em) ;
    \draw[dashed] (b) --+ (+2em,-2em) ;

    \node[inode] (aa) at (-8em,2*\h) {};
    \node[leaf] (ad) at (-4em,2*\h) {};
    \node[inode] (ab) at (2em,2*\h) {};

    \node[inode] (aaab) at (-11em,3*\h) {};
    \node[inode] (aab) at (-5em,3*\h) {};
    \node[inode] (aba) at (1em,3*\h) {};
    \node[inode] (abba) at (4.5em,3*\h) {};
    \node[leaf] (abbaabad) at (3.5em,4*\h) {};
    \node[gleaf] (abbabad) at (5.5em,4*\h) {};

    \node[inode] (aaba) at (-6.5em,3.5*\h) {};
    \node[leaf] (aaabaaabbaaba) at (-12em,6*\h) {};
    \node[leaf] (aaabbaaba) at (-10em,6*\h) {};
    \node[leaf] (abad) at (2em,4*\h) {};
    \node[leaf] (aabad) at (-5.5em,4.25*\h) {};
    \node[inode] (aabaa) at (-7.5em,4.25*\h) {};
    \node[leaf] (aabaaabbaabad) at (-8em,6*\h) {};
    \node[gleaf] (aabaabbabad) at (-6.5em,6*\h) {};
    \node[leaf] (abaaabbaabad) at (-1em,6*\h) {};
    \node[gleaf] (abaabbaabad) at (1em,6*\h) {};
    \node[inode] (abaa) at (0em,4*\h) {};
    \node[inode] (aabba) at (-3.5em,4*\h) {};
    \node[gleaf] (aabbabad) at (-2.5em,6*\h) {};
    \node[leaf] (aabbaabad) at (-4.5em,6*\h) {};

    \draw (root) to node[above,midway,sloped] {\texttt{a}} (a) ;

    \draw (a) to node[above,midway,sloped] {\texttt{a}} (aa) ;
    \draw (a) to node[above,midway,right] {\texttt{\#}} (ad) ;
    \draw (a) to node[above,midway,sloped] {\texttt{b}} (ab);

    \draw (ab) to node[above,midway,sloped] {\texttt{ba}} (abba) ;
    \draw (abba) to node[above,midway,sloped] {\texttt{aba\#}} (abbaabad) ;
    \draw (abba) to node[above,midway,sloped] {\texttt{ba\#}} (abbabad) ;
    \draw (ab) to node[above,midway,sloped] {\texttt{a}} (aba) ;

    \draw (aa) to node[above,midway,sloped] {\texttt{ab}} (aaab) ;
    \draw (aa) to node[above,midway,sloped] {\texttt{b}} (aab) ;

    \draw (aab) to node[above,midway,sloped] {\texttt{a}} (aaba) ;

    \draw (aba) to node[above,midway,right] {\texttt{\#}} (abad) ;

    \draw (aaab) to node[above,midway,sloped] {\texttt{aaabbaaba\#}} (aaabaaabbaaba);
    \draw (aaab) to node[above,midway,sloped] {\texttt{baaba\#}} (aaabbaaba);
    \draw (aaba) to node[above,midway,right] {\texttt{\#}} (aabad);
    \draw (aaba) to node[above,midway,sloped] {\texttt{a}} (aabaa);
    \draw (aabaa) to node[above,midway,sloped] {\texttt{abbaaba\#}} (aabaaabbaabad);
    \draw (aabaa) to node[above,midway,sloped] {\texttt{bbaba\#}} (aabaabbabad);

    \draw (aba) to node[above,midway,sloped] {\texttt{a}} (abaa);
    \draw (abaa) to node[above,midway,sloped] {\texttt{bbaba\#}} (abaabbaabad);
    \draw (abaa) to node[above,midway,sloped] {\texttt{abbaaba\#}} (abaaabbaabad);

     \draw (aab) to node[above,midway,sloped] {\texttt{ba}}(aabba);
     \draw (aabba) to node[above,midway,sloped] {\texttt{ba\#}}(aabbabad);
     \draw (aabba) to node[above,midway,sloped] {\texttt{aba\#}}(aabbaabad);

\end{tikzpicture}}
\caption{The tree when Step A is applied to the suffix tree of $A$ in Figure~\ref{fig:ST}.
\label{fig:ST-A-S1}}
\end{figure}

%
In Step A, we first find $\alpha^{a}$ in $T^A$ using the doubling technique
 in incremental editing of~\cite{McCreight:76} as follows.
For a string $\chi$, we call a leaf a {\em $\chi$-leaf}
 if the suffix represented by the leaf contains $\chi$ as a prefix.
Then, if and only if a string $\chi$ occurs at least twice in $A$,
 there are at least two $\chi$-leaves in $T^A$.
To find $\alpha^{a}$,
 we check for some suffixes $\alpha'$ of $\alpha$
 whether or not there are at least two $\alpha'$-leaves in $T^A$.
Let $\alpha_{(j)}$ be the suffix of $\alpha$ of length $j$.
We first check whether or not there are at least two $\alpha_{(j)}$-leaves
 in increasing order of $j=1,2,4,8,\ldots, |\alpha|$.
Suppose $\alpha_{(h)}$ is the shortest suffix among these $\alpha_{(j)}$'s 
 such that there is only one $\alpha_{(j)}$-leaf.
(Note that $h/2 \le |\alpha^{a}| < h $.)
Then, $\alpha^{a}$ can be found
 by checking whether or not there are at least two $\alpha_{(j)}$-leaves
 in decreasing order of $j=h-1, h-2, \ldots$,
 which can be done efficiently using suffix links~\cite{McCreight:76}.

After finding $\alpha^{a}$,
 we insert the suffixes from the longest $\alpha^{a}\delta \gamma$
 to the shortest $d\gamma$ where $d$ is the last character of $\alpha\delta$:
Inserting the longest suffix is done by traversing down the suffix tree from the root
 and inserting the other suffixes can be done efficiently
 using suffix links~\cite{McCreight:76,Ukkonen:95}.
Figure~\ref{fig:ST-A-S1} shows the tree when Step A is applied to
the suffix tree of $A$ in Figure~\ref{fig:ST}.


\partitle{Step B}

Let $T'$ be the tree when Step A is finished.
In Step B, we first find $\alpha^{*}$ using $T'$
 and insert into $T'$ the suffixes longer than $\alpha^{a}\delta\gamma$
 from the longest $\alpha^{*}\delta \gamma$ to the shortest
 in the same way as we did in Step A.
Unlike Step A, however,
 we have the following difficulties for finding $\alpha^{*}$
 because $T'$ is an incomplete suffix tree:
 i) suffixes of $B$ longer than $\alpha^{a}\delta\gamma$ are not represented in $T'$,
 ii) both suffixes of $A$ and suffixes of $B$ are represented in one tree $T'$, and
 iii) some suffixes of $B$ (a-suffixes of type 1)
     share leaves with suffixes of $A$
     but some suffixes (a-suffixes of type 3) of $B$ do not.

But we show that $T'$ has sufficient information to find $\alpha^{*}$.
(Recall $\alpha^{*}$ is the longest suffix of $\alpha$
 occurring at least twice in $A$ or in $B$.)
Notice that our goal in Step $B$ is finding $\alpha^{*}$ but not $\alpha^{b}$.
If $|\alpha^{b}| \le |\alpha^{a}|$,
 for no suffix $\alpha'$ of $\alpha$ longer than $\alpha^{a}$,
 there are at least two $\alpha'$-leaves in $T'$,
 in which case $\alpha^{*}=\alpha^{a}$.
Thus, we do not need to consider suffixes of $\alpha$ shorter than or equal to $\alpha^{a}$.

%
\begin{lemma} \label{lem:1}
For a suffix $\alpha'$ of $\alpha$ longer than $\alpha^{a}$,
 if and only if $\alpha'$ occurs at least twice in $B$,
 there are at least two $\alpha'$-leaves in $T'$.
\end{lemma}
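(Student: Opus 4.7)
The plan is to prove the two directions by first classifying exactly which leaves of $T'$ can be $\alpha'$-leaves, and then relating each such leaf to an occurrence of $\alpha'$ in $B$. The leaves of $T'$ are of two kinds: leaves inherited from $T^A$ (one per suffix of $A$) and leaves added in Step A (one per suffix of $\alpha^a\delta\gamma$ longer than $\gamma$, equivalently one per position $p$ of $B$ in the range $[|\alpha|-|\alpha^a|+1,\,|\alpha|+|\delta|]$). Since $|\alpha'|>|\alpha^a|$, the maximality of $\alpha^a$ forces $\alpha'$ to occur at most once in $A$, and that unique occurrence is exactly at position $|\alpha|-|\alpha'|+1$ (the end of $\alpha$), giving precisely one $\alpha'$-leaf inherited from $T^A$, namely the one labeling $\alpha'\beta\gamma$.

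For the easy direction ($\Leftarrow$), I would argue that if $T'$ contains at least two $\alpha'$-leaves, then since at most one comes from $T^A$, at least one must have been added in Step A. That leaf corresponds to an occurrence of $\alpha'$ in $B$ at some position $p \in [|\alpha|-|\alpha^a|+1,\,|\alpha|+|\delta|]$. Because $|\alpha'|>|\alpha^a|$ implies $|\alpha|-|\alpha'|+1 < |\alpha|-|\alpha^a|+1$, this $p$ differs from the end-of-$\alpha$ occurrence that $\alpha'$ always has in $B$ (inherited from the common prefix). Hence $\alpha'$ occurs at least twice in $B$.

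For the harder direction ($\Rightarrow$), suppose $\alpha'$ occurs at least twice in $B$, and let $p \neq |\alpha|-|\alpha'|+1$ be a second occurrence. I would do a case analysis on $p$ to show $p$ must lie in the Step A range, which would give an extra $\alpha'$-leaf on top of the one from $T^A$. If $p > |\alpha|+|\delta|$, the occurrence is entirely inside $\gamma$, hence also an occurrence inside $\gamma$ of $A$, contradicting uniqueness in $A$. If $p \le |\alpha|-|\alpha^a|$ (the remaining bad subcase), the occurrence extends from $\alpha$ into $\delta$; writing $r = |\alpha|-p+1$ for the length of its portion inside $\alpha$, this portion is simultaneously $\alpha[p..|\alpha|]$ (a genuine suffix of $\alpha$ of length $r>|\alpha^a|$) and a prefix of $\alpha'$, which forces it to appear again at the internal position $|\alpha|-|\alpha'|+1 < p$ inside $\alpha$. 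That would give two occurrences in $A$ of a suffix of $\alpha$ longer than $\alpha^a$, contradicting the maximality of $\alpha^a$.

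The main obstacle I expect is exactly this last subcase: ruling out occurrences to the left of the Step A window. It requires the careful observation that an occurrence of $\alpha'$ straddling the $\alpha$–$\delta$ boundary of $B$ forces its in-$\alpha$ part to be a suffix of $\alpha$ having two occurrences in $\alpha$ itself, and thus two occurrences in $A$, which is the key place where the definition of $\alpha^a$ is used. The other cases (occurrences purely inside $\gamma$ or purely inside $\alpha$) reduce cleanly to uniqueness of $\alpha'$ in $A$.
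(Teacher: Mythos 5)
Your proof is correct and takes essentially the same route as the paper's: both establish the single $\alpha'$-leaf coming from $\alpha'\beta\gamma$ in $T^A$ via the maximality of $\alpha^{a}$, and both handle a second occurrence of $\alpha'$ in $B$ by the same two contradiction arguments (an occurrence properly inside $\alpha$ gives two occurrences of $\alpha'$ in $A$; an occurrence straddling the $\alpha$--$\delta$ boundary yields a suffix of $\alpha$ longer than $\alpha^{a}$ occurring twice in $\alpha$), forcing the occurrence into the window of suffixes inserted in Step A. Your explicit classification of the leaves of $T'$ and your elimination of the inside-$\gamma$ case by contradiction (rather than observing, as the paper does, that such a suffix is already represented) are only cosmetic reorganizations of the same argument.
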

\begin{proof}
(If) We first show that there is an $\alpha'$-leaf in $T'$
      due to the occurrence $occ_1$ of $\alpha'$ as a suffix of $\alpha$.
     Since $\alpha$ is common in $A$ and $B$,
      $occ_1$ appears in both $A$ and $B$
      as prefixes of $\alpha'\beta\gamma$ and $\alpha'\delta\gamma$, respectively.
     Note that $\alpha'\delta\gamma$ is not represented in $T'$
     but $\alpha'\beta\gamma$ is.
     Hence, there is an $\alpha'$-leaf $f_1$ in $T'$ due to $occ_1$.

    Next, we show that there is another $\alpha'$-leaf in $T'$
     due to an occurrence $occ_2$ of $\alpha'$ other than $occ_1$ in $B$.
    Let $p_1$ and $p_2$ be the start positions of $occ_1$ and $occ_2$ in $B$, respectively,
    and let $p_a$ be the start position of the suffix $\alpha^{a}\delta\gamma$ in $B$.
    %
    We first prove by contradiction that $occ_2$ is contained in $\alpha^{a}\delta\gamma$.
    Suppose otherwise, that is, $p_2$ precedes $p_a$.
    We have two cases according to which of $p_1$ and $p_2$ precedes.
    First consider the case that $p_2$ precedes $p_1$,
    In this case, $occ_2$ is properly contained in $\alpha$,
     which means that $\alpha'$ appears at least twice in $\alpha$
     and also in $A$.
    This contradicts with the definition of $\alpha^{a}$
     since $\alpha'$ is longer than $\alpha^{a}$.
    Consider the case that $p_1$ precedes $p_2$.
    Let $\alpha''$ be the suffix of $\alpha$ starting at $p_2$.
    Then, $|\alpha'| > |\alpha''| > |\alpha^{a}|$
     and $\alpha''$ is a prefix of $occ_2$.
    Furthermore, $\alpha''$ is also a prefix of $\alpha'\delta\gamma$.
    It means that $\alpha''$ occurs twice in $\alpha$
     as a proper prefix of $\alpha'$ and a proper suffix of $\alpha'$.
    This contradicts with the definition of $\alpha^{a}$
     since $\alpha''$ is longer than $\alpha^{a}$.
    Therefore, $p_2$ does not precede $p_a$,
     which means $occ_2$ is contained in $\alpha^{a}\delta\gamma$.

    Now we show that there is an $\alpha'$-leaf $f_2$ in $T'$ due to $occ_2$
     and $f_2$ is distinct from $f_1$.
    Let $\eta$ be the suffix of $B$ starting at position $p_2$.
    Then, $\eta$ is a proper suffix of $\alpha^{a}\delta\gamma$
     since $p_2$ follows $p_a$.
    Because $T'$ represents all suffixes of $\alpha^{a}\delta\gamma$,
     there exists an $\alpha'$-leaf $f_2$ representing $\eta$ in $T'$.
    Moreover, suffixes of $A$ and $B$ share leaves in $T'$
     only if they are suffixes of $\gamma$.
    Since the suffix $\alpha'\beta\gamma$ represented by $f_1$ is longer than $\gamma$,
    $f_1$ and $f_2$ are distinct.

(Only if) We prove by contradiction the converse,
        i.e., if $\alpha'$ occurs only once in $B$,
        there is only one $\alpha'$-leaf in $T'$.
    Suppose there are two $\alpha'$-leaves in $T'$.
    Since $\alpha'$ occurs only once in $B$,
     no suffix of $B$ except for $\alpha'\delta\gamma$ contains $\alpha'$ as a prefix.
    Moreover, there is no leaf representing $\alpha'\delta\gamma$ in $T'$
     because $|\alpha'\delta\gamma| > |\alpha^{a}\delta\gamma|$ and
     no suffix of $B$ longer than $\alpha^{a}\delta\gamma$ is represented in $T'$.
    Thus, no $\alpha'$-leaf in $T'$ represents a suffix of $B$
     and the two $\alpha'$-leaves in $T'$ represent two suffixes of $A$.
    It means $\alpha'$ occurs twice in $A$,
     which contradicts with the definition of $\alpha^{a}$
     that $\alpha^{a}$ is the longest suffix of $\alpha$
     occurring at least twice in $A$
     since $|\alpha'| > |\alpha^{a}|$.
    Therefore, there is only one $\alpha'$-leaf in $T'$
     if $\alpha'$ occurs only once in $B$.
\qed
\end{proof}

\begin{corollary}  \label{cor:1}
For a suffix $\alpha'$ of $\alpha$ longer than $\alpha^{a}$,
 if and only if $\alpha'$ occurs at least twice in $A$ or in $B$,
 there are at least two $\alpha'$-leaves in $T'$.
\end{corollary}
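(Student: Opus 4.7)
The plan is to reduce Corollary~\ref{cor:1} to Lemma~\ref{lem:1} by observing that, for suffixes of $\alpha$ longer than $\alpha^{a}$, occurring at least twice in $A$ is impossible. Concretely, I would begin by recalling the definition of $\alpha^{a}$: it is the longest suffix of $\alpha$ that occurs at least twice in $A$. Hence for any suffix $\alpha'$ of $\alpha$ with $|\alpha'| > |\alpha^{a}|$, the string $\alpha'$ can occur at most once in $A$.

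With that observation in place, the statement ``$\alpha'$ occurs at least twice in $A$ or in $B$'' collapses, for such $\alpha'$, to ``$\alpha'$ occurs at least twice in $B$''. Applying Lemma~\ref{lem:1} then immediately yields the biconditional claimed in the corollary: both directions of ``iff'' transfer verbatim from the lemma.

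Formally, I would write the proof as two short implications. For the forward direction, assume $\alpha'$ occurs at least twice in $A$ or in $B$; by the maximality of $\alpha^{a}$ it must occur at least twice in $B$, so by Lemma~\ref{lem:1} there are at least two $\alpha'$-leaves in $T'$. For the converse, assume there are at least two $\alpha'$-leaves in $T'$; Lemma~\ref{lem:1} gives that $\alpha'$ occurs at least twice in $B$, which certainly implies that $\alpha'$ occurs at least twice in $A$ or in $B$.

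There is no real obstacle here: the corollary is a direct packaging of Lemma~\ref{lem:1} together with the defining property of $\alpha^{a}$. The only point worth stating carefully is why the ``$A$'' side of the disjunction contributes nothing new for suffixes longer than $\alpha^{a}$, which is exactly the maximality of $\alpha^{a}$.
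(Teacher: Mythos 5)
Your proposal is correct and matches the paper's (implicit) argument: the paper states the corollary without proof precisely because, by the maximality of $\alpha^{a}$, no suffix $\alpha'$ of $\alpha$ longer than $\alpha^{a}$ can occur twice in $A$, so the disjunction reduces to the condition of Lemma~\ref{lem:1}. Nothing further is needed.
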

By Corollary~\ref{cor:1}, we can find $\alpha^{*}$
 by checking for some suffixes $\alpha'$ of $\alpha$ longer than $\alpha^{a}$
 whether or not there are at least two $\alpha'$-leaves in $T'$.
It can be done in $O(|\alpha^{*}|)$ using the way similar to Step A.
When Step B is applied to the tree in Figure~\ref{fig:ST-A-S1},
 the resulting tree is the same as the tree in Figure~\ref{fig:ST-A}
 except that the terminal arc connected to the leftmost leaf (black diamond)
  is labeled with suffix {\tt aaabbaaba\#} of string $A$
  but not with a-suffix {\tt aa(abba/baabb)aba\#} of the alignment.

\partitle{type 4}

In Step C, for every suffix $\alpha'$ of $\alpha$ longer than $\alpha^{*}$,
 we {\em implicitly} insert the suffix $\alpha' \delta \gamma$ of $B$.
Since the suffix $\alpha' \delta \gamma$ of $B$ and the suffix $\alpha'\beta\gamma$ of $A$
 (a-suffixes of type 4) should be represented by one leaf,
 we do not insert a new leaf but convert the leaf representing $\alpha' \beta \gamma$
 to represent the a-suffix $\alpha' (\beta /\, \delta)\gamma$.
It can be done by replacing {\em implicitly}
 every $\beta$ properly contained in labels of terminal arcs
 with $(\beta /\, \delta)$.
Consequently, we explicitly do nothing in Step C,
 and these implicit changes are already reflected in the given alignment.
For example, the suffix tree in Figure~\ref{fig:ST-A}
 is obtained by replacing implicitly the label {\tt aa\underline{abba}aba\#}
 of the terminal arc connected to the leftmost leaf (black diamond)
 with a-suffix {\tt aa(abba/baabb)aba\#} of the alignment.

\partitle{time complexity}

We consider the time complexity of our algorithm.
In step A, finding $\alpha^{a}$ takes $O(|\alpha^{a}|)$ time
 and inserting suffixes takes $O(|\alpha^{a}\delta\hat{\gamma}|)$ time,
 where $\hat{\gamma}$ is the longest prefix of $\gamma$ such that
 $d\hat{\gamma}$ occurs at least twice in $A$ and $B$
 (where $d$ is the character preceding $\gamma$).
For detailed analysis, the readers are referred to~\cite{McCreight:76}.
In step B, similarly, finding $\alpha^{*}$ takes $O(|\alpha^{*}|)$ time
 and inserting suffixes takes $O(|\alpha^{*}\delta\hat{\gamma}|)$ time.
Step C takes no time since it is implicitly done.
Thus, we get the following theorem.

\begin{theorem}
Given an alignment $\alpha (\beta /\, \delta) \gamma$
 and the suffix tree of string $\alpha \beta \gamma$,
 the suffix tree of $\alpha (\beta /\, \delta) \gamma$
 can be constructed in $O(|\alpha^{*}| + |\delta| + |\hat{\gamma}|)$ time.
\end{theorem}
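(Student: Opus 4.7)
The plan is to bound each of the three steps separately and then sum the costs. Since Step C is performed only implicitly (the labels containing $\beta$ are simply reinterpreted as containing $(\beta/\delta)$), it contributes no running time, so the work lies entirely in Steps A and B.

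For Step A, I would first argue the cost of locating $\alpha^{a}$ is $O(|\alpha^{a}|)$. The doubling search checks in $T^{A}$ whether there are at least two $\alpha_{(j)}$-leaves for $j = 1, 2, 4, \ldots$ until failing at some $j = h$ with $h/2 \le |\alpha^{a}| < h$; each check can be performed in constant amortized time by walking along suffix links from the node representing $\alpha_{(j/2)}$, and the subsequent linear scan down from $j = h-1$ to $|\alpha^{a}|+1$ uses the same suffix-link mechanism. This gives $O(|\alpha^{a}|) \subseteq O(|\alpha^{*}|)$. Then the insertion of the suffixes of $\alpha^{a}\delta\gamma$ down to $d\gamma$ is a straightforward application of McCreight's incremental construction with suffix links, whose total cost is $O(|\alpha^{a}\delta\hat{\gamma}|) \subseteq O(|\alpha^{*}| + |\delta| + |\hat{\gamma}|)$; the point is that the relevant portion of $\gamma$ that is actually scanned is bounded by $|\hat{\gamma}|$ because once a longest repeated prefix $d\hat{\gamma}$ is exhausted, all remaining suffixes reach existing internal structure in $O(1)$ amortized time through suffix links.

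For Step B, I would appeal to Corollary 1, which guarantees that the same doubling/suffix-link search used in Step A correctly identifies $\alpha^{*}$ in the tree $T'$ produced by Step A, in time $O(|\alpha^{*}|)$. The subsequent insertion of suffixes of $\alpha^{*}\delta\gamma$ longer than $\alpha^{a}\delta\gamma$ again uses McCreight's incremental scheme; together with the work of Step A, the cumulative amortized cost of scanning and creating new internal nodes is $O(|\alpha^{*}\delta\hat{\gamma}|) = O(|\alpha^{*}| + |\delta| + |\hat{\gamma}|)$. Summing the bounds over Steps A, B, and C then yields the claimed complexity.

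The main subtlety will be justifying that McCreight's suffix-link analysis still applies cleanly, since during Step B we are operating on the hybrid tree $T'$ that already contains suffixes of $B$ from Step A alongside suffixes of $A$. Here I would note two facts: first, the suffix-link invariant is maintained whenever a new internal node is created during Step A (because the insertions mimic those of McCreight's algorithm on the string $\alpha^{a}\delta\gamma$ overlaid on $T^{A}$), so the links needed in Step B are available; second, by Lemma 1 any internal node we require in order to locate $\alpha^{*}$ or to perform rescanning/fast-link jumps during insertion is guaranteed to exist in $T'$. With these two observations the standard amortized argument that each character of the inserted string is scanned $O(1)$ times carries over without modification, closing the analysis.
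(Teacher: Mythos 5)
Your proposal is correct and follows the paper's own argument exactly: Step A is charged $O(|\alpha^{a}|)+O(|\alpha^{a}\delta\hat{\gamma}|)$, Step B is charged $O(|\alpha^{*}|)+O(|\alpha^{*}\delta\hat{\gamma}|)$ with Corollary~1 validating the leaf-counting search in the hybrid tree $T'$, Step C is free, and the amortized suffix-link accounting is deferred to McCreight, just as in the paper. One minor mechanical slip worth noting: suffix links remove characters from the front, so the doubling phase cannot reach the locus of $\alpha_{(j)}$ from that of $\alpha_{(j/2)}$ by following suffix links --- the $O(|\alpha^{a}|)$ bound for that phase instead comes from the geometric sum of the costs of locating each $\alpha_{(j)}$ from the root, with suffix links used only in the subsequent decreasing scan from $j=h-1$ downward.
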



\section{Suffix tree of general alignments}
\label{sec:ST-general-A}


\partitle{Input strings}
We extend the definitions and the construction algorithm
 into more general alignments.
Let $\alpha_1 (\beta_1 /\, \delta_1) \ldots \alpha_k (\beta_k /\, \delta_k) \alpha_{k+1}$
 be an alignment of two strings
 $A=\alpha_1 \beta_1 \ldots \alpha_k \beta_k \alpha_{k+1}$ and
 $B= \alpha_1 \delta_1 \ldots \alpha_k \delta_k \alpha_{k+1}$.
For $1\le i \le k+1$,
 let $\alpha^{a}_{i}$ and $\alpha^{b}_{i}$ be the longest suffixes of $\alpha_i$
 occurring at least twice in $A$ and $B$, respectively,
 and let $\alpha^{*}_{i}$ be the longer of $\alpha^{a}_{i}$ and $\alpha^{b}_{i}$.
That is, $\alpha^{*}_{i}$  is the longest suffix of $\alpha_{i}$
 which occurs at least twice in $A$ or in $B$.
Moreover, let $\hat{\alpha}_{i}$ be the longest prefix of $\alpha_{i}$ such that
 $d_{i}\hat{\alpha}_{i}$ occurs at least twice in $A$ and $B$
 where $d_{i}$ is the character preceding $\alpha_{i}$ in $B$.

The suffix tree of the alignment is a compacted trie
 that represents the following a-suffixes of the alignment.
\be
\item a suffix of $\alpha_{k+1}$,
\item a suffix of $\alpha^{*}_i \beta_i \alpha_{i+1}\ldots\alpha_{k+1}$
         longer than $\alpha_{i+1}\ldots\alpha_{k+1}$,
\item a suffix of $\alpha^{*}_i \delta_i \alpha_{i+1}\ldots\alpha_{k+1}$
         which is longer than $\alpha_{i+1}\ldots\alpha_{k+1}$,
\item $\alpha_{i}' (\beta_i /\, \delta_i) \ldots \alpha_{k+1}$,
        where $\alpha_{i}'$ is a suffix of $\alpha_{i}$ longer than $\alpha^{*}_{i}$.
\ee

Given the suffix tree of $A$,
 the suffix tree of the alignment can be constructed as follows
 (the details are omitted).
\be
\item[A1.] Find $\alpha^{a}_{i}$ using the suffix tree of $A$ for each $i$ ($1 \le i \le k$).
\item[A2.] Insert the suffixes of
         $\alpha^{a}_{i} \delta_i \alpha_{i+1} \ldots \alpha_{k+1}$
         longer than $\alpha_{i+1}\ldots\alpha_{k+1}$ for each $i$.
\item[B1.] Find $\alpha^{*}_{i}$ for each $i$.
\item[B2.] Insert the suffixes of
         $\alpha^{*}_{i} \delta_i  \ldots \alpha_{k+1}$
         longer than $\alpha^{a}_{i} \delta_i  \ldots \alpha_{k+1}$ for each $i$.
\item[C.] Insert {\em implicitly}
         the suffixes of $\alpha_{i} \delta_i  \ldots \alpha_{k+1}$
         longer than $\alpha^{*}_{i} \delta_i  \ldots \alpha_{k+1}$
         for each $i$.
\ee

\begin{theorem}
Given an alignment
 $\alpha_1 (\beta_1 /\, \delta_1) \ldots \alpha_k (\beta_k /\, \delta_k) \alpha_{k+1}$
 and the suffix tree of string
 $\alpha_1 \beta_1 \ldots \alpha_k \beta_k \alpha_{k+1}$,
 the suffix tree of the alignment
 can be constructed in time at most linear to
 the sum of the lengths of $\alpha^{*}_i$, $\delta_i$, $\hat{\alpha}_{i+1}$
 for $1 \le i \le k$.
\end{theorem}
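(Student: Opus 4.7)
The plan is to extend the single-boundary analysis of Section~\ref{subsec:const-simple} sequentially across all $k$ boundaries, applying the same A--B--C scheme at each boundary $i$, and then argue that the per-position costs add up to the claimed sum. I would process the boundaries from right to left, letting $T'_i$ denote the intermediate tree after boundaries $i+1, \ldots, k$ have already been handled. At boundary $i$, the algorithm mirrors the simple case: find $\alpha^{a}_{i}$ by the doubling technique in $T'_i$, insert the suffixes of $\alpha^{a}_{i}\delta_i\alpha_{i+1}\ldots\alpha_{k+1}$ that are longer than $\alpha_{i+1}\ldots\alpha_{k+1}$, then find $\alpha^{*}_{i}$ in the updated tree, insert the suffixes of $\alpha^{*}_{i}\delta_i\alpha_{i+1}\ldots\alpha_{k+1}$ that are longer than $\alpha^{a}_{i}\delta_i\alpha_{i+1}\ldots\alpha_{k+1}$, and finally relabel implicitly the terminal arcs corresponding to the remaining suffixes longer than $\alpha^{*}_{i}\delta_i\alpha_{i+1}\ldots\alpha_{k+1}$ so that they carry alignment labels.

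Next I would generalize Lemma~\ref{lem:1} to this multi-boundary setting. The statement I need is that for any suffix $\alpha'$ of $\alpha_i$ strictly longer than $\alpha^{a}_{i}$, there are at least two $\alpha'$-leaves in $T'_i$ if and only if $\alpha'$ occurs at least twice in $B$; Corollary~\ref{cor:1} then follows. The ``if'' direction splits the second occurrence of $\alpha'$ in $B$ into cases by its start position: occurrences strictly inside $\alpha_i$ (either before or after the suffix occurrence) contradict the definition of $\alpha^{a}_{i}$, exactly as in Section~\ref{subsec:const-simple}, leaving only occurrences that begin at or after the start of $\alpha^{a}_{i}\delta_i\alpha_{i+1}\ldots\alpha_{k+1}$; these lie in segments of $B$ that have already been inserted during boundaries $i, i+1, \ldots, k$, so they produce a distinct $\alpha'$-leaf. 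The ``only if'' direction argues by contradiction against the maximality of $\alpha^{a}_{i}$, noting that suffixes of $B$ longer than the currently inserted portion cannot yet appear as $\alpha'$-leaves.

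For the running-time bound, I would analyse each boundary separately. Steps A1 and B1 at boundary $i$ cost $O(|\alpha^{*}_{i}|)$ by the doubling-with-suffix-links technique of McCreight~\cite{McCreight:76}. Steps A2 and B2 insert the suffixes of $\alpha^{*}_{i}\delta_i\alpha_{i+1}\ldots\alpha_{k+1}$ from the longest down to $d_i\alpha_{i+1}\ldots\alpha_{k+1}$; by the same McCreight-style amortization used in the simple-case theorem, the cost is $O(|\alpha^{*}_{i}| + |\delta_i| + |\hat{\alpha}_{i+1}|)$, where the $|\hat{\alpha}_{i+1}|$ term accounts for the descent inside the common chunk $\alpha_{i+1}$ before the new path diverges from any existing arc. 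Step C is free because it only reinterprets the labels of terminal arcs already present. Summing over $i = 1, \ldots, k$ yields the claimed bound.

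The main obstacle I expect is the generalized leaf-counting lemma: keeping the ``at least two $\alpha'$-leaves'' criterion untainted by insertions performed at other boundaries. In particular, one must show that $\alpha'$-leaves introduced to the right of position $i$ cannot falsely inflate the leaf count for a suffix of $\alpha_i$ that occurs only once in both strings, and that the type-3 leaves inserted at the current boundary before $\alpha^{*}_{i}$ is determined do not spoil the doubling search. Once this is established, the rest is routine induction on $i$ combined with the per-boundary analysis of the previous section.
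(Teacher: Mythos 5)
The paper never writes out a proof of this theorem (it explicitly omits the details), but the algorithm it outlines is phase-structured in a way that yours is not, and the difference is not cosmetic: the paper runs step A1 for \emph{every} $i$ on the pure suffix tree of $A$, then performs \emph{all} of the A2 insertions (the suffixes of $\alpha^{a}_{j}\delta_j\alpha_{j+1}\ldots\alpha_{k+1}$ for every $j$), and only then determines any $\alpha^{*}_{i}$. Your right-to-left, one-boundary-at-a-time schedule breaks the leaf-counting criterion on which the B-phase relies. The generalization of Lemma~\ref{lem:1} you need says that a suffix $\alpha'$ of $\alpha_i$ longer than $\alpha^{a}_{i}$ has at least two $\alpha'$-leaves iff it occurs at least twice in $A$ or in $B$; your case analysis of the second occurrence $occ_2$ (``strictly inside $\alpha_i$, or beginning at or after the start of $\alpha^{a}_{i}\delta_i\alpha_{i+1}\ldots$'') is incomplete in the multi-chunk setting, because $occ_2$ may start inside, or overlap, a chunk $\delta_j$ with $j<i$. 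Under your schedule that segment of $B$ has not yet been inserted when you process boundary $i$, and the occurrence is witnessed neither by an $A$-leaf (that would force $\alpha'$ to occur twice in $A$, contradicting $|\alpha'|>|\alpha^{a}_{i}|$) nor by any already-inserted $B$-leaf. The test therefore under-counts, the value you compute for $\alpha^{*}_{i}$ is too short, and a-suffixes that must be of types 2 and 3 get collapsed into a single type-4 leaf whose terminal arc is labeled with $(\beta_i/\,\delta_i)$. When boundary $j$ is processed later, the $B$-suffix beginning at $occ_2$ must branch off this path, possibly at a point \emph{inside} the $(\beta_i/\,\delta_i)$ block, which cannot be realized by splitting a single arc; the resulting tree violates the definition of Section~\ref{subsec:def-simple}.

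A second, more benign imprecision: in your $T'_i$ the doubling search does not return $\alpha^{a}_{i}$ but some string between $\alpha^{a}_{i}$ and $\alpha^{*}_{i}$ in length, because $B$-leaves inserted at boundaries $j>i$ already inflate the count; this does not hurt correctness or the time bound, but it should be acknowledged. The repair for the main gap is exactly the paper's ordering: find all the $\alpha^{a}_{i}$ on the suffix tree of $A$ before inserting anything, insert the suffixes of $\alpha^{a}_{j}\delta_j\alpha_{j+1}\ldots\alpha_{k+1}$ for every $j$, and only then search for the $\alpha^{*}_{i}$. With that ordering every ``$B$-only'' second occurrence of $\alpha'$ is covered by an inserted leaf (occurrences contained in some $\alpha_j$ are covered by $A$-leaves, and occurrences protruding from $\alpha_j$ into $\delta_j$ at depth greater than $|\alpha^{a}_{j}|$ are ruled out by the maximality of $\alpha^{a}_{j}$). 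Your per-boundary cost accounting --- $O(|\alpha^{*}_{i}|)$ for the searches and $O(|\alpha^{*}_{i}|+|\delta_i|+|\hat{\alpha}_{i+1}|)$ for the insertions, summed over $i$ --- is otherwise sound and matches the claimed bound.
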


\partitle{3 strings}

Our definitions and algorithms can be also extended into
 alignments of more than two strings.
For example, consider an alignment $\alpha (\beta /\, \delta /\,\vartheta) \gamma$
 of three strings $A=\alpha\beta\gamma$, $B=\alpha\delta\gamma$, and $C=\alpha\vartheta\gamma$
 such that the first characters of $\beta\gamma$, $\delta\gamma$, and $\vartheta\gamma$
 are distinct.
We define $\alpha^{a}$, $\alpha^{b}$, and $\alpha^{c}$ as
 the longest suffix of $\alpha$ which occurs at least twice in $A$, $B$, and $C$, respectively,
 and $\alpha^{*}$ as the longest of $\alpha^{a}$, $\alpha^{b}$, and $\alpha^{c}$.
Then, there are 5 types of a-suffixes.
(Suffixes of $\alpha^{*} \vartheta \gamma$ longer than $\gamma$
  are added as a new type of a-suffixes.)
The suffix tree of the alignment can be defined similarly and constructed as follows:
 From the suffix tree of $\alpha\beta\gamma$,
 we construct the suffix tree of $\alpha (\beta /\, \delta ) \gamma$
  by inserting some suffixes of $B$,
 and then convert into the suffix tree of $(\beta /\, \delta /\,\vartheta) \gamma$
  by inserting suffixes of $C$ (and some suffixes of $B$ occasionally).
We omit the details.



%
%




\end{document}